  \providecommand\BibTeX{{
    \normalfont B\kern-0.5em{\scshape i\kern-0.25em b}\kern-0.8em\TeX}}}
\renewcommand*{\ALG@name}{Procedure}
\newtheorem{theorem}{Theorem}[section]
\newtheorem{lemma}[theorem]{Lemma}
\newtheorem{example}{Example}
\newcommand{\sq}{\hbox{\rlap{$\sqcap$}$\sqcup$}}
\newcommand{\qed}{\hspace*{\fill}\sq}
\newenvironment{proof}{\noindent {\bf Proof.}\ }{\qed\par\vskip 4mm\par}
\begin{document}
\title{Fort Formation by an Automaton}
\author{
\IEEEauthorblockN{Kartikey Kant}
\IEEEauthorblockA{\textit{Indian Institute of Technology Guwahati}\\
Guwahati, India \\
kartikeykant@gmail.com}
\and
\IEEEauthorblockN{Debasish Pattanayak}
\IEEEauthorblockA{\textit{Indian Statistical Institute}\\
Kolkata, India \\
drdebmath@gmail.com}
\and
\IEEEauthorblockN{Partha Sarathi Mandal}
\IEEEauthorblockA{\textit{Indian Institute of Technology Guwahati}\\
Guwahati, India \\
psm@iitg.ac.in}
}

\maketitle
\begin{abstract}
    Building structures by low capability robots is a very recent research development~\cite{DBLP:journals/algorithmica/CzyzowiczDP20}. A robot (or a mobile agent) is designed as a deterministic finite automaton. The objective is to make a structure from a given distribution of materials (\textit{bricks}) in an infinite grid $Z\times  Z$. The grid cells may contain a brick (\textit{full cells}) or it may be empty (\textit{empty cells}). The \textit{field}, a sub-graph induced by the full cells, is initially connected. At a given point in time, a robot can carry at most one brick. The robot can move in four directions (north, east, south, and west) and starts from a \textit{full cell}. The \textit{Manhattan distance} between the farthest full cells is the \textit{span} of the field.
    We consider the construction of a \textit{fort}, a structure with the minimum span and maximum covered area. On a square grid, a fort is a hollow rectangle with bricks on the perimeter. We show that the construction of such a fort can be done in $O(z^2)$ time -- with a matching lower bound $\Omega(z^2)$ -- where $z$ is the number of bricks present in the environment.
\end{abstract}

\begin{IEEEkeywords}
Automaton, Programmable Matter, Fort Formation, Infinite Grid, Mobile Robots.
\end{IEEEkeywords}

\section{Introduction}
\subsection{The Problem}
Czyzowicz \textit{et al.}~\cite{DBLP:journals/algorithmica/CzyzowiczDP20} have initiated a study on building structures by a robot (or mobile agent) designed as a deterministic finite automaton in an infinite grid with materials.
The motivation behind building such a structure stems from many areas, such as the demarcation of land using a robot in inaccessible areas or building structures in space where the materials are limited. 
In these cases, it is important to utilize the available resources to their complete capacity.
This paper explores how a connected structure can be formed enclosing the maximum possible area with a given number of bricks.
We denote the resulting structure as a \textit{fort}.

\subsection{Environment Model}
We are given an infinite oriented grid $Z \times  Z$, which can be viewed as a grid of cells in a two-dimensional plane, with the cells having either horizontal or vertical common edges. Thus, each cell can have four adjacent cells to its North, East, South, and West. We denote a cell having a brick as a \textit{full cell} and a cell without a brick as an \textit{empty cell}. The sub-graph induced by the full cells is called a \textit{field}. Initially, the field is connected, i.e., there exists a rectilinear path between any two full cells, which consists of full cells. However, after operations by the robot, the field may get disconnected. 
The maximally connected sub-graph of a field is called its \textit{component}. The number of cells in a field is called its \textit{size} and the maximum Manhattan distance\footnote{Manhattan distance between two cells at $(x,y)$ and $(x',y')$ is $\lvert x - x'\rvert + \lvert y - y' \rvert$.}
between two cells in a field is called its \textit{span}.
\subsection{Robot Model}

The robot starts at one of the full cells.
It can move in one of the four directions (East, North, West, or South) from a cell. 
It is oriented towards one of the directions, and when we say the robot turns left, it reorients itself locally (A robot facing north turns left to face west).
It can only \textit{observe} the state of its \textit{current} cell, i.e., the cell contains a brick or not.
It can pick up a brick from a \textit{full} cell and drop the brick in an \textit{empty} cell.
It can move with or without carrying a brick (makes the robots \textit{light} or \textit{heavy}).
It can carry at most one brick at once. The robot can move through full cells while carrying a brick.
It is formalized as a Mealy automaton $\mathcal{R} = (X,Y,\mathsection,\delta, \lambda, S_0, S_f)$, where $X = \{e,f\} \times \{l,h\}$ is the input, $Y = \{N,E,S,W\}\times \{e,f\} \times \{l,h\} $ is the output, $\mathsection$ is the set of states, $\lambda$ is the transition function with $S_0$ as initial state and $S_f$ as the final state.  The alphabet denote the following in order: $N$ : North, $E$: East, $S$: South, $W$: West, $e$: empty cell, $f$: full cell, $l$: light robot (not carrying a brick), $h$: heavy robot (carries a brick).
The state of the robot contains its current orientation, the current cell's status, and if the robot carries a brick.
We follow the same model for the robot introduced by Czyzowicz \textit{et al.}~\cite{DBLP:journals/algorithmica/CzyzowiczDP20}.
The robot has a very limited amount of memory, which maintains the set of states. 
Even with limited memory, it can perform a bounded exploration up to a small distance (say 8) and keep the state of the cells encoded in its states.
In other words, at any point in time, the robot can have information about all the cells at a distance at most $8$ from its current cell. 

\subsection{Target Structure}
The largest connected structure with the maximum enclosed area in a square grid is a hollow rectangle, i.e., the bricks are only on the sides of the rectangle.
The target is to construct this hollow rectangle -- denoted as \textit{fort} -- using the bricks available.
A \textit{perfect fort} is the fort having all its sides made up of an equal number of bricks.
The manhattan distance between the diagonally opposite bricks is the \textit{span} of the fort.
This span will always be an even number for a perfect fort. For a perfect fort, it is easy to deduce the relation $z = 2s'$ where $z$ is the total number of bricks in the fort, and $s'$ is the span of the fort.  The fort, shown in Fig.~\ref{figfort8}, is an example of a perfect fort. A perfect fort contains $4m$ bricks, where each \textit{wall} of the fort contains $m+1$ bricks. Note that the bricks at the corner are shared by two \textit{walls}. Any fort consisting of odd number of bricks is a \textit{rough fort} (ref. Fig.~\ref{figstage1}(b), (d), (f) and (h)). The rough fort is a rectangle where an additional brick may be attached to one of its corners since the number of bricks is odd.

\subsection{Challenge}
Designing algorithms for a low memory robot with minimal capability is challenging, given the absence of the information. If the robot has sufficient memory, it suffices to keep track of the entire structure of the field by doing a simple zig-zag exploration and build any structure. Due to the lack of memory, the robot can neither remember its past actions nor the positions of the bricks in the infinite grid. Since the grid does not have any markers, it is easy for a robot to get lost, and systematic exploration of an infinite grid is impossible with limited memory~\cite{DBLP:conf/focs/AleliunasKLLR79}. 
It is also important to keep track of the state of the structure and achieve termination after all bricks have been used by the robot.
\subsection{Our Contribution}
We consider the problem of the construction of a fort with a finite automaton in an infinite square grid. Specifically, we show that
\begin{itemize}
    \item any algorithm that builds a fort from a given initial connected field of span $s = O(\sqrt{z})$ and size, $z$ must use at least $\Omega(z^2)$ time.
    \item Algorithm~\ref{algo:buildfort} \textsc{BuildFort} builds a fort from a given initial connected field of span $s$ and size $z$ in $O(z^2)$ time. Hence, our algorithm is optimal.
\end{itemize}

\subsection{Related Works}
Many problems with robots or mobile agents exploring an unknown environment have been studied over the years~\cite{DBLP:series/lncs/Das19,DBLP:series/lncs/DaymudeHRS19}. 
The studies can broadly be categorized in two ways based on the environment in which the robot operates; it can be a  network designed as a graph along the edges of which the robot moves, or it can be a geographical terrain that might be filled with obstacles.
A problem of pattern formation has been studied in \cite{10.1145/1835698.1835761} and \cite{doi:10.1137/S009753979628292X}. It discusses robots that can freely move in the plane, arrange themselves in a pattern given as an input. In these studies, the robots are asynchronous and have full visibility of each others' positions. 
Programmable particles modeled as automatons are also considered in the literature~\cite{DBLP:journals/dc/LunaFSVY20,DBLP:conf/icdcn/LunaFPSV18,DBLP:conf/europar/LunaFSVY20}.
Luna et al.~\cite{DBLP:journals/dc/LunaFSVY20} considered shape formation by a set of programmable particles in a triangular grid.
With faulty particles, Luna et al.~\cite{DBLP:conf/icdcn/LunaFPSV18} showed the recovery of a line in a triangular grid.
The graph exploration problem can have two different scenarios depending on whether the nodes have a unique label or not. In the former case, the robot can identify different labels and thus uniquely identify each node. In our case, the grid is anonymous, so it is similar to an undirected anonymous graph. Such graph exploration problems have also been discussed~\cite{DBLP:conf/opodis/ChalopinDK10,DBLP:journals/algorithmica/CzyzowiczDP20}. 

\subsection{Organization}
The rest of the paper is organized as the following. In Section~\ref{sec:lowerbound}, we prove a lower bound to build a fort. In Section~\ref{sec:fortformation}, we present an algorithm that builds the fort with Section~\ref{sec:complexity} proving the correctness and determining the complexity of the algorithm. Finally Section~\ref{sec:conclusion} concludes the paper.

\section{Lower Bound to Build a Fort}\label{sec:lowerbound}

In this section, we prove the lower bound for the formation of a fort from an initially connected field to be $\Omega(z^2)$, where $z$ is the number of bricks. 
We show this lower bound by converting a configuration of bricks in a \textit{rough disc} to a fort.
A \textit{rough disc} is an arrangement of bricks such that the distance from any brick on the boundary from a center brick is either $r$ or $r+1$, where $r$ is the radius (measured with Manhattan distance) of the rough disc as shown in Fig.~\ref{fig:roughdisc}(a). A \textit{perfect disc} has all the bricks on the boundary at a distance $r$ from the center. 

\begin{figure}
    \centering
    \includegraphics[width=0.8\linewidth]{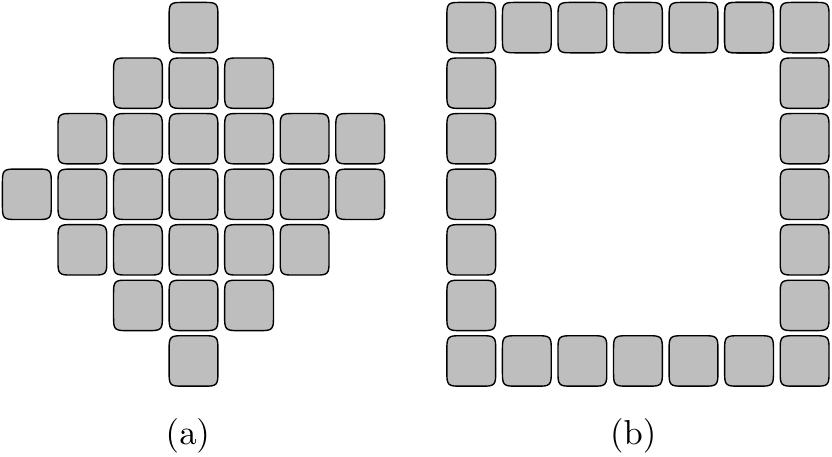}
    \caption{A rough disc and a fort}
    \label{fig:roughdisc}
\end{figure}

\begin{theorem}\label{theoremlowerbound}
There exists an initial field of size $z$ and span $s = O(\sqrt{z})$ such that any algorithm that builds a fort starting from this field must use time $\Omega(z^2)$.
\end{theorem}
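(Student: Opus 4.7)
The plan is to exhibit an initial configuration whose bricks are densely clustered, and then lower-bound the total \emph{brick displacement} needed to turn it into a fort by $\Omega(z^2)$; since every step the robot takes while carrying a brick moves exactly one brick by exactly one cell, this displacement is itself a lower bound on the running time. I would take the initial field to be a rough disc of radius $r$ centered at some cell $c$, as in Fig.~\ref{fig:roughdisc}(a). Such a disc is connected, contains $z \approx 2r^2$ bricks, and has span $2r = O(\sqrt{z})$, matching the hypothesis; moreover every initial brick lies within Manhattan distance $r+1 = O(\sqrt{z})$ of $c$, so $\sum_i d(s_i, c) = O(z^{3/2})$.

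Next I would analyze any target fort. With $z$ bricks on its perimeter, a fort is a hollow rectangle of side length $\Theta(z)$, so its bricks are spread over a region of diameter $\Theta(z)$. The key geometric estimate I would prove is that, regardless of where the fort is placed in the grid,
\[
\sum_{i=1}^{z} d(f_i, c) \;=\; \Omega(z^2),
\]
where $f_1,\dots,f_z$ are the final brick positions. This follows by summing $|x - c_x|$ along a single horizontal wall of length $L = \Theta(z)$: the sum $\sum_{x=x_0}^{x_0+L}|x-c_x|$ is $\Omega(L^2) = \Omega(z^2)$ uniformly in the position of $c_x$, since even at the optimal choice $c_x = \mathrm{median}$ the sum is still quadratic in $L$. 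Applying the reverse triangle inequality then yields, for every matching $\sigma$ between initial and final bricks,
\[
\sum_{i=1}^{z} d(s_i, f_{\sigma(i)}) \;\ge\; \sum_i d(f_{\sigma(i)}, c) - \sum_i d(s_i, c) \;=\; \Omega(z^2) - O(z^{3/2}) \;=\; \Omega(z^2),
\]
so the Earth-Mover distance between the initial and final configurations is $\Omega(z^2)$.

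The final step is to connect this displacement bound to the running time. Each step the robot takes while carrying a brick displaces the carried brick by exactly one cell, so the total number of heavy steps over the execution is at least the Earth-Mover distance between the initial field and the final fort, which the previous paragraph established to be $\Omega(z^2)$. Light steps, pick-ups, and drops only add to the total time, giving the claimed bound. The main technical hurdle is the geometric estimate, which must hold uniformly over all placements of the fort relative to $c$; the wall-sum formulation handles this cleanly because $\min_{c_x\in \mathbb{Z}} \sum_{x=x_0}^{x_0+L} |x-c_x|$ is attained at the median and is still $\Theta(L^2)$, so no translation of the fort can drive the total displacement below quadratic order.
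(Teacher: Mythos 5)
Your proposal is correct, and it uses the same adversarial instance as the paper (a rough disc of radius $r$ with $z\approx 2r^2$ bricks, span $2r=O(\sqrt z)$) together with the same basic reduction that running time is at least the total distance travelled by carried bricks. Where you differ is in how the displacement is lower-bounded. The paper argues pointwise about two opposite walls of the target fort: they are $z/4$ apart and each must receive at least $z/4$ bricks, while all disc bricks lie within span $s$ of one another, so every brick is at distance $\max(k,\,z/4-k-s)\ge z/10$ from at least one of the two walls; hence at least $z/4$ bricks each move $\Omega(z)$. You instead run a global potential/Earth-Mover argument centered at the disc center $c$: the fort positions satisfy $\sum_i d(f_i,c)=\Omega(z^2)$ because the sum of $|x-c_x|$ over a wall of $\Theta(z)$ consecutive cells is $\Omega(z^2)$ even at the median, whereas $\sum_i d(s_i,c)=O(z^{3/2})$, and the reverse triangle inequality gives a displacement bound of $\Omega(z^2)$ for every matching, hence for the physical one. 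The two routes buy slightly different things: the paper's two-wall count is shorter, while your median/EMD accounting makes the "uniformly over all placements of the fort" issue explicit and avoids the mild sloppiness in the paper's step relating $k$ and $z/4-k-s$ to a single moved set of bricks; both correctly conclude that at least a constant fraction of the bricks must travel $\Omega(z)$, giving $\Omega(z^2)$ time.
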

\begin{proof}
Consider a rough disc $D_1$ with $z$ bricks and radius $r \geq 9$. Then $z$ satisfies $ z_1 \leq z < z_2$, where $z_1 = 2r^2 + 2r + 1$ and $z_2 = 2r^2 + 4r + 2~ (= 2r^2 + 2r + 1 + 2r + 1)$. The span of this initial arrangement of bricks is $s = 2r$ and $z = O(r^2)$, so we have $s = O(\sqrt{z})$.
Consider the two opposite walls of the resulting fort. 
The distance between the two opposite walls is $z/4$, and there are at least $z/4$ bricks on each wall.
Suppose the bricks in $D_1$ are at a distance $k$ from one wall, then they are at a distance at least $z/4 - k - s$ from the opposite wall.
Now, the bricks would occupy their target position on one of the walls.
So at least $z/4$ bricks would move a distance $k' = \max(k, z/4 - k -s)$. Since $s = O(\sqrt{z})$, we have $k' \geq z/10$ for $z \geq 100$.
Any algorithm must move at least $z/4$ bricks by a distance $z/10$ and must use $\Omega(z^2)$ time to build the fort.
\end{proof}
Note that the span of the resulting fort is $O(z)$. Also, the maximum span of the initial configuration can be at most $z$.
\section{Fort Formation}\label{sec:fortformation}

\begin{figure}[ht]
\centering
\includegraphics[width=0.6\linewidth]{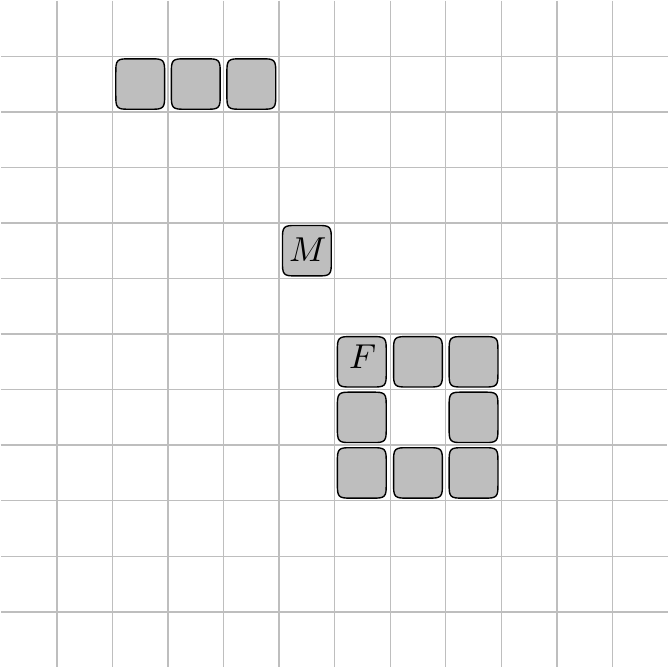}
\caption{A fort with 8 bricks, the marker and a free component. The cell labeled F is the first cell of the fort and the cell labeled M is the marker.}
\label{figfort8}
\end{figure}

The algorithm for fort formation works like the following.
The robot first creates two special components, the \textit{marker} and the \textit{first cell} of the fort, to initiate the construction of the fort (ref. Fig.~\ref{figfort8}). 
Any other component is called a \textit{free component}.
A marker is used to determine when the robot is in the vicinity of the fort. 
The marker is always positioned closer to the fort and farther from the free component.
The construction of a fort constitutes of four main steps:
\begin{enumerate}
    \item create enough area for the fort to extend
    \item pick a brick from the field
    \item bring the brick back to the fort
    \item add the brick to the fort to extend it
\end{enumerate}
To achieve this goal, we use three procedures developed by Czyzowicz et al.~\cite{DBLP:journals/algorithmica/CzyzowiczDP20} for nest formation as subroutine for fort formation, that are \sloppy \textsc{Sweep}, \textsc{FindNextBrick} and \textsc{ReturnToMarker}. 
We present a summary of those subroutines and refer the reader to the paper by Czyzowicz et al.~\cite{DBLP:journals/algorithmica/CzyzowiczDP20} for more details.

\begin{enumerate}[label=\Roman*., align=left, leftmargin=*]
    \item[\textsc{Sweep}:] This procedure is called every time a brick is added to the structure starting from the first brick. As the name suggests, in this procedure, the robot sweeps the bricks nearby the structure to a distance such that there is enough space to build the structure.
    Specifically, the robot makes a counterclockwise traversal of the fort and checks for full cells at a distance at most seven from the fort, which is not part of the fort. 
    It finds an empty cell at a distance at least seven from the fort, which is in a direction away from the fort and places the brick that it has picked from the full cell. This procedure may create multiple components. The components are created at a distance of seven while the robot has information about all the bricks up to distance eight; hence all the components remain accessible. 
    At the end of the procedure, the marker is placed at a distance three from the fort and a distance four from a component.
    \item[\textsc{FindNextBrick}:] Once the first cell of the fort is established, we need to add bricks to the fort to make it larger. This procedure finds the next brick that is to be added to the fort. It performs a \textit{switch-traversal} of a \textit{search-walk} in one of the components until it reaches a \textit{leaf cell}\footnote{A full cell with exactly one adjacent cell} and then picks the brick at the leaf cell. If it does not reach a leaf cell at the end of the search walk, it performs \textit{shifting} to get a free brick.
    \item[\textsc{ReturnToMarker}:] After the robot picks the free brick, it performs a reverse switch-traversal. It reaches near the marker at the end of the reverse switch-traversal, and then it returns to the marker.
\end{enumerate}

\begin{algorithm}[!ht]
\caption{\textsc{Sweep}}\label{proc:sweep}
\begin{algorithmic}[1]
\State $M$ = Marker
\If{the robot is at $M$}
    \State go to the first cell of the fort $F$
\EndIf
\State perform a full counterclockwise direction of the border of $F$, perform the following actions after every step:
\For{each full cell $c \notin F \cup \{M\} $ at distance at most 7 from the robot}
    \State $c'$ = the cell currently occupied by the robot 
    \State go to $c$ and pick the brick 
    \State move in direction away from $F$ and stop at the first empty cell at distance at least 7 from $F$
    \State drop and brick and return to $c'$
    \EndFor
\If{there exists a free component $C$}
    \State pick the brick from marker and place it at distance $3$ from the first cell of $F$ and at distance $4$ from $C$, creating a new marker
\EndIf
\State go to the new marker
\end{algorithmic}
\end{algorithm}

\begin{algorithm}[!ht]
\caption{\textsc{FindNextBrick}}\label{proc:FindNextBrick}
\begin{algorithmic}[1]
\State $W $ = the search walk that starts at the cell where the robot is located
\State go to the nearest cell belonging to a free component 
\State perform a switch-traversal of $W$
\If{the robot is at the a leaf}
    \State pick the brick
\Else
    \State perform shifting
\EndIf
\end{algorithmic}
\end{algorithm}

\begin{algorithm}[!ht]
\caption{\textsc{ReturnToMarker}}\label{proc:ReturnToMarker}
\begin{algorithmic}[1]
    \State $W$ = the search walk traversed in the last cell to \textsc{FindNextBrick}
    \State let $S_1, S_2,\cdots, S_l$ be the segments in $W$
    \If{the robot is at the first cell of $S_l$}
        \State turn towards the penultimate cell of $S_{l-1}$
        \State move $min\{2, |S_{l-1}| - 1\}$ cells forward
        \State if $|S_{l-1}| = 2$ and $l > 2$, then make a turn towards the penultimate cell of $S_{l-2}$
    \EndIf
    \State starting at the current location, perform a switch-traversal of $\psi(W)$
    \State go to the marker
\end{algorithmic}
\end{algorithm}

In Example~\ref{example:searchwalk}, we describe the switch traversal of a field and how the robot obtains a free brick to bring back to the marker. We consider a simple example. There can be much more complicated cases. 

\begin{figure}[ht]
    \centering
    \includegraphics[width=0.7\linewidth]{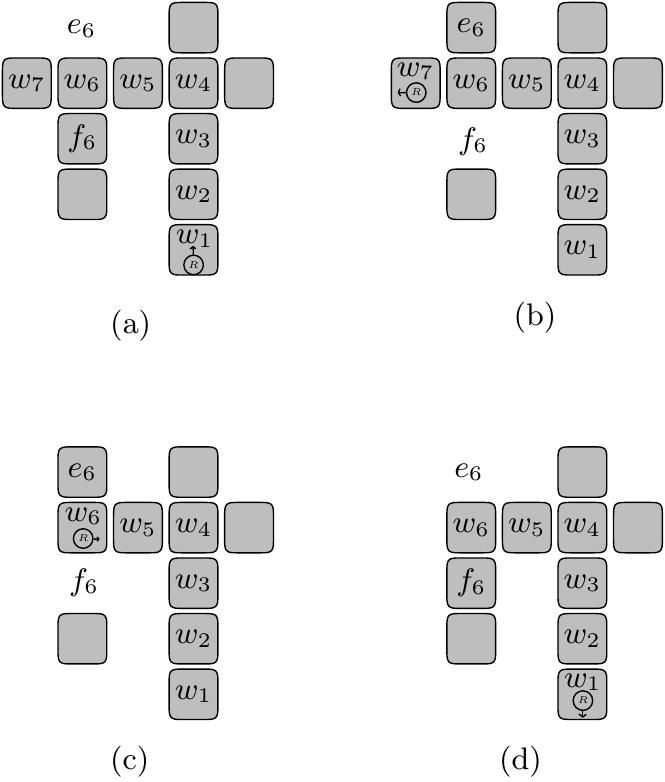}
    \caption{(a) The search-walk $\mathcal{W}$ starts at $w_1$, follows a left-free segment to $w_4$, and a right free segment up to the leaf node at $w_7$. (b) Switch traversal moves the brick at $f_6$ to $e_6$. (c) The robot picks the brick at leaf cell $w_7$. (d) The robot performs switch traversal on the reverse search walk of $\mathcal{W}$ to return to $w_1$.}
    \label{fig:searchwalk}
\end{figure}

\begin{example}\label{example:searchwalk}
    Consider a field, as shown in Fig.~\ref{fig:searchwalk}(a). The robot starts at $w_1$ and facing north. 
    The robot starts with a left-oriented traversal. It moves straight until it is possible to move left and then it turns left. After the turn, it switches its orientation to the right and moves straight until it is possible to move right.
    The search walk $\mathcal{W}$ starts at $w_1$ with a left-free segment $S_1 = (w_1, w_2, w_3, w_4)$ and then a right-free segment $ S_2 = (w_4, w_5, w_6, w_7)$ which ends at the leaf cell $w_7$. 
    Observe that, for the walk to proceed further, the robot has to turn right at $w_7$ or go straight, but there are no full cells after $w_7$.
    The robot performs a switch traversal, as shown in Fig.~\ref{fig:searchwalk}(b), and arrives at $w_7$.
    To keep track of the path it has traversed, it converts the left-free segments into right-free segments and vice versa. 
    As shown in Fig.~\ref{fig:searchwalk}(b), it moves the brick from left to right from $f_6$ to $e_6$. 
    Now, the reverse of the segment $S_2$ becomes a right-free segment starting from $w_7$.
    As per Fig.~\ref{fig:searchwalk}(c), it picks up the leaf cell at $w_7$ and turns around to reach $w_6$.
    Finally, while the robot performs the reverse search walk, it restores the brick from $e_6$ to $f_6$, and the connectivity of the component is reestablished.
    Thus we pick a brick from the component and arrive at the starting point $w_1$. 
\begin{figure}[ht]
    \centering
    \includegraphics[width=0.7\linewidth]{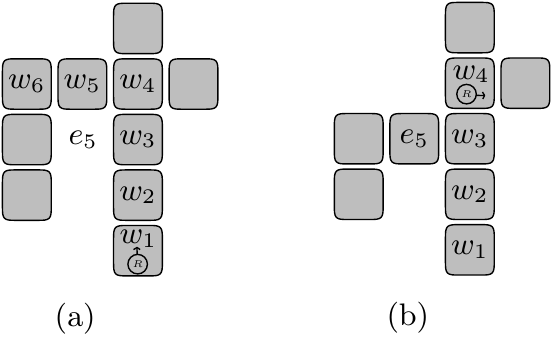}
    \caption{(a) The search-walk $\mathcal{W}$ starts at $w_1$, follows a left-free segment to $w_4$, and a right free segment up to the corner at $w_6$. (b) The robot picks the brick at leaf cell $w_6$ and performs shifting by moving the brick at $w_5$ to $e_5$.}
    \label{fig:shifting}
\end{figure}

    In the next call to \textsc{FindNextBrick}, the robot will again perform a switch traversal and arrive at $w_6$ as shown in Fig.~\ref{fig:shifting}(a). Now, it cannot turn right at $w_6$. So, the robot has to pick a brick from there, since the position of a leaf node may not be close to $w_6$. To get a free brick, the performs \textit{shifting} after picking the brick at $w_6$. 
    In shifting, the robot moves $w_5$ to $e_5$ and reaches $w_4$ carrying a brick as shown in Fig.~\ref{fig:shifting}(b). It makes sure that the component remains connected while a free brick is obtained by the robot, and then it performs a reverse switch traversal to return to $w_1$. It obtains a brick at the end of the traversal. \qed
\end{example} 

\subsection{Constructing the Fort}

As described in Example~\ref{example:searchwalk}, the Procedure \ref{proc:sweep} (\textsc{Sweep}), Procedure \ref{proc:FindNextBrick} (\textsc{FindNextBrick}) and Procedure \ref{proc:ReturnToMarker} (\textsc{ReturnToMarker}) work in tandem to provide a free brick. Next, we describe the construction of the fort as we add bricks one by one to extend the structure.
The robot carrying a free brick arrives at a marker after \textsc{ReturnToMarker}. From the marker, the robot comes to the first cell of the fort and invokes Procedure~\ref{proc:extendfort} (\textsc{ExtendFort}). Until the size of the fort is $4$, the robot follows a strict rule of adding bricks to the fort in the manner shown in Figure \ref{figstage0}.

\begin{figure}[ht]
\centering
\includegraphics[width=\linewidth]{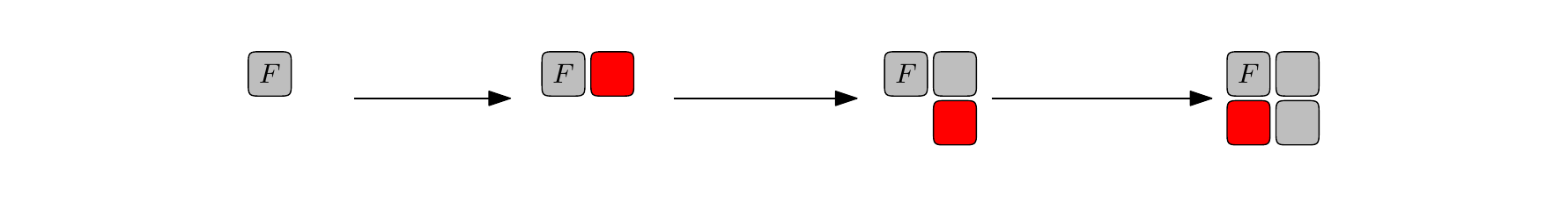}
\caption{The first four stages of construction of a fort where the newly added brick in each step is marked with a red shade.}
\label{figstage0}
\end{figure}

These configurations are handled when the value of the variable $stage = 0$. Once the fourth brick is added, the value of $stage$ becomes $1$.
Now the bricks are added in a manner that the resulting configuration is similar after every four steps. This can be seen in Figure \ref{figstage1}. 

\begin{figure}[ht]
\centering
\includegraphics[width=\linewidth]{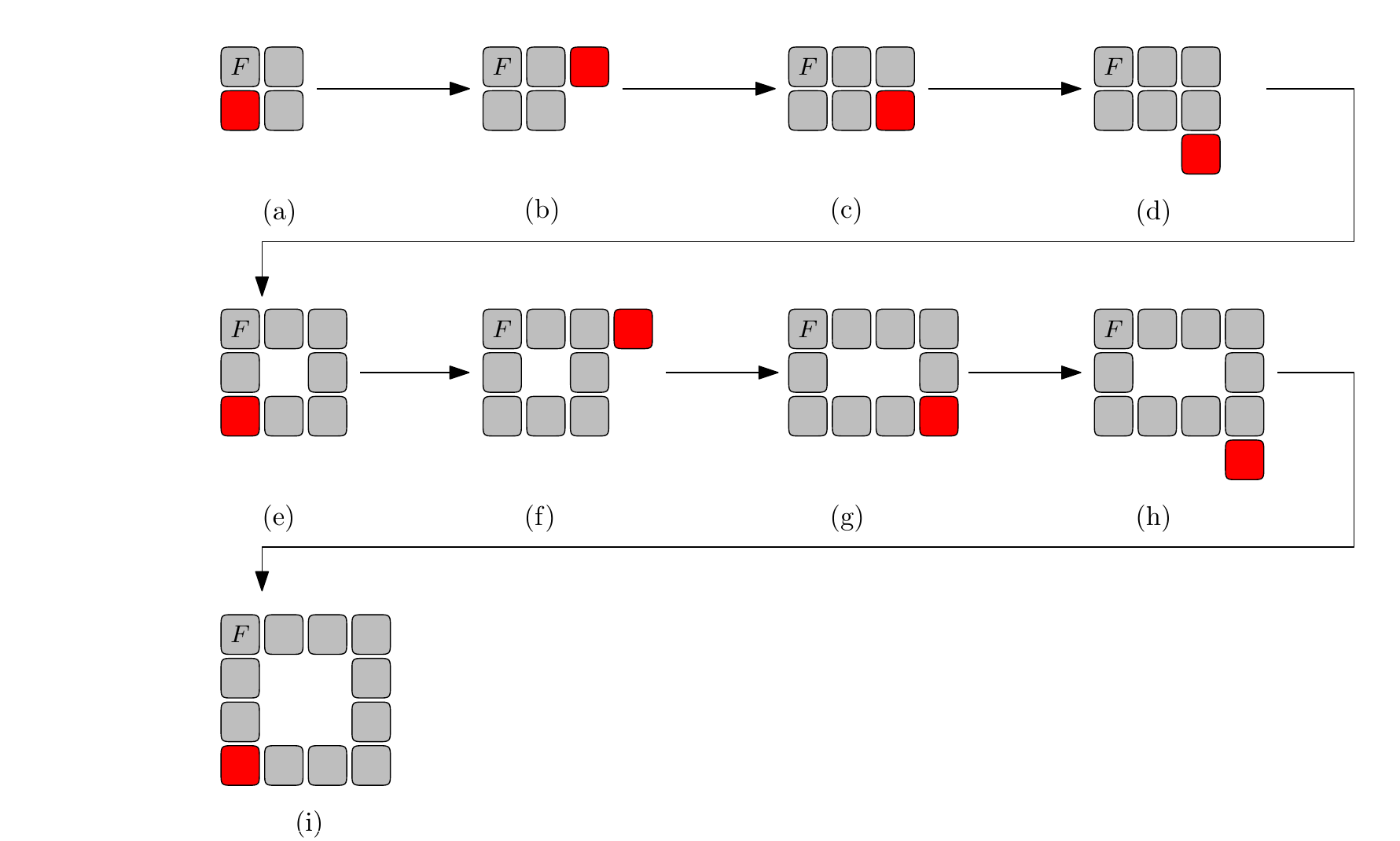}
\caption{The repeating configurations where (a), (e) and (i) are perfect forts; (c) and (g) are rectangular forts; (b), (d), (f) and (h) are rough forts.}
\label{figstage1}    
\end{figure}
\begin{algorithm}[h!t]
\caption{\textsc{ExtendFort}}\label{proc:extendfort}
\begin{algorithmic}[1]
\State Arrive at the first cell of the fort oriented to east
    \If{$stage=0$} 
        \If{$counter = 1$}
            \State Place the brick in the front
            \State Increase $counter$ by 1
        
        \ElsIf{$counter$ is 2}
            \State Move one step forward 
            \State Place brick on the right
            \State Increase $counter$ by 1 
        \ElsIf{$counter$ is 3}
            \State Place the brick on the right 
            \State Increase $counter$ by 1 
            \State Increase $stage$ by 1
        \EndIf
    \Else
        \If{$counter$ is 0}
            \State Call \textsc{TraverseWall}
            \State Drop Brick in Front
            \State Increase $counter$ by 1
        \ElsIf{$counter$ is 1}
            \State Call \textsc{TraverseWall}
            \State Move one step back
            \State Turn Right
            \State Call \textsc{ShiftBricks}
            \State Increase $counter$ by 1
        \ElsIf{$counter$ is 2}
            \State Call \textsc{TraverseWall}
            \State Turn Right
            \State Call \textsc{TraverseWall}
            \State Drop Brick in front
            \State Increase $counter$ by 1
        \ElsIf{$counter$ is 3}
            \State Call \textsc{TraverseWall}
            \State Turn Right
            \State Call \textsc{TraverseWall}
            \State Move one step back
            \State Turn Right
            \State Call \textsc{ShiftBricks}
            \State Increase $counter$ by 1
        \EndIf
    \EndIf
    \State $counter \leftarrow counter$ mod $4$
    \State Call \textsc{Sweep}
\end{algorithmic}
\end{algorithm}

In \textsc{ExtendFort}, one should note that turning right means the robot orients itself in the direction which is to its right originally. The same goes for turning around and turning left. 

A configuration of the fort is controlled by the variables $stage$ and $counter$. Depending on this configuration, the robot, starting from the first cell, traverses the perimeter of the fort and extends the fort. Sometimes, it needs to shift the bricks to the left. It does this by dropping the brick it currently carries to the left, picking the brick from the current cell, and moving forward. After adding the brick and changing the variables accordingly and \textsc{Sweep} is called.

\sloppy Procedure \textsc{ExtendFort} invokes Procedure~\ref{proc:TraverseWall}
(\textsc{TraverseWall}) and Procedure~\ref{proc:shiftbricks} (\textsc{ShiftBricks}). In \textsc{TrverseWall}, the robot moves along the wall of the fort until it reaches the end of the wall. In \textsc{ShiftBricks}, the robot shifts the bricks of the wall outwards. Since we traverse the walls of the fort in a clockwise manner, \textsc{ShiftBricks} always moves the bricks to the left of it.

\begin{algorithm}[ht]
\caption{\textsc{TraverseWall}}\label{proc:TraverseWall}
\begin{algorithmic}[1]
\While{the cell in front is full}
    \State move one step forward
\EndWhile
\end{algorithmic}
\end{algorithm}

\begin{algorithm}[ht]
\caption{\textsc{ShiftBricks}}\label{proc:shiftbricks}
\begin{algorithmic}[1]
\While{the cell in front is full}
    \State move one step forward
    \State drop the brick on the left
    \If{the cell in front is full}
        \State pick the brick from the current cell
    \EndIf
\EndWhile    
\end{algorithmic}
\end{algorithm}

\subsection{Algorithm \textsc{BuildFort}}

This section describes Algorithm~\ref{algo:buildfort} (\textsc{BuildFort}) for the robot to build a fort from a given initial connected field. First, the robot points out the marker and the first cell of the fort. Then \textsc{Sweep} is called to create space for fort formation, and the variables are declared. The robot brings bricks one by one and adds them to the fort to extend it. At last, the robot brings the brick to the marker and adds it to the fort. 
\makeatletter
\renewcommand*{\ALG@name}{Algorithm}
\makeatother
\begin{algorithm}[ht]
\caption{\textsc{BuildFort}}\label{algo:buildfort}
\begin{algorithmic}[1]
\If{the span of the field is at most 2}
    \State exit 
\EndIf
\State the cell occupied by the robot is at the marker.
\State a full cell at distance 2 from the marker becomes the first cell of the fort (i.e., the north-west corner cell of the fort)
\State Call \textsc{Sweep} 
\State Set $counter = 1$
\State Set $stage = 0$
\While{there exists a free component C}
    \State Call \textsc{FindNextBrick}
    \State Call \textsc{ReturnToMarker}
    \State Call \textsc{ExtendFort}
\EndWhile
\State pick the marker and add it using \textsc{ExtendFort}
\end{algorithmic}
\end{algorithm}

\section{Correctness and Complexity}\label{sec:complexity}

To prove the correctness of the procedures, we will define a few terms first.
A fort has a \textit{gap of width k} if each free component is at a distance $k+1$ from the fort. 
A field is structured if it satisfies the following conditions:
\begin{itemize}
    \item the marker is at a distance three from the first cell of the fort and a distance four from some free component.
    \item the fort has a gap of width seven.
\end{itemize}
If a component is at a distance larger than $7$ from the fort, it is called a \textit{lost component}.
A field is \textit{strongly structured} if it satisfies the following conditions:
\begin{itemize}
    \item the field is structured
    \item there are no lost components
    \item the robot is at the marker.
\end{itemize}

\subsection{Correctness}
We refer the reader to the paper by Czyzowicz \textit{et al.}~\cite{DBLP:journals/algorithmica/CzyzowiczDP20} for the correctness of the procedures \textsc{Sweep}, \textsc{FindNextBrick}, and \textsc{ReturnToMarker}.
After execution of \textsc{Sweep}, the robot moves the bricks nearby the fort to a distance at least 7 to create space for expansion. 
The robot takes the bricks from full cells within distance seven from the fort and drops them in an empty cell next to a full cell so that the brick remains part of a free component which is at most distance seven from the fort.
This may result in multiple components, but \textsc{Sweep} ensures that there are no lost components and properly repositions the marker such that a free component can always be found for the robot to execute \textsc{FindNextBrick} on the free component. 
We state the following lemma.

\begin{lemma}\label{lemsweepcorrect}
\textsc{Sweep} results in a strongly structured field. \cite{DBLP:journals/algorithmica/CzyzowiczDP20}
\end{lemma}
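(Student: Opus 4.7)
The plan is to verify each of the three conditions in the definition of a strongly structured field by tracking the three distinct phases of \textsc{Sweep}: (i) the counterclockwise traversal that clears the gap, (ii) the repositioning of the marker, and (iii) the final movement of the robot. Since the procedure is quoted from \cite{DBLP:journals/algorithmica/CzyzowiczDP20}, the goal here is to sketch why each invariant is established rather than reprove the subroutine from scratch.

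First I would argue that after the traversal loop, the fort has a gap of width seven. The loop iterates around the border of $F$ and, at every step, inspects all cells within radius~$7$ of the robot's current position; since the robot visits every border cell of $F$, the union of these radius-$7$ neighborhoods covers every cell within distance~$7$ of $F$. Any full cell $c$ found in this region (other than $F$ itself and the current marker) is relocated to the first empty cell, in a direction away from $F$, at distance at least~$7$ from $F$. Thus when the loop terminates, no full cell outside $F \cup \{M\}$ lies within distance~$7$ of $F$, which is precisely the gap-of-width-seven condition. I would also note that because the robot has perception radius~$8$, each displaced brick is dropped adjacent to some full cell it can see, so the brick becomes part of a free component that remains within distance~$8$ of $F$ — hence no component is lost.

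Next I would verify the marker-placement invariant. After clearing the gap, the old marker is a singleton brick at distance~$3$ from $F$; the procedure picks it up and drops it at the prescribed location: distance~$3$ from the first cell of $F$ and distance~$4$ from some free component $C$. The main obstacle here is feasibility of this placement — one must argue that such a position always exists given the current geometry of the field. This follows because after the gap has been carved out, the closest free component sits at some distance $d \in \{8, 9, \ldots\}$ from $F$ along some row or column (up to $L^1$ distance), so a cell on the straight path between $F$ and $C$ at distance exactly~$3$ from $F$ lies at distance $d-3 \geq 5 \geq 4$ from $C$; picking a direction in which a closer component exists lets us achieve equality at~$4$. Combined with the facts that the displaced bricks never settle within distance~$7$ of $F$ and that the robot has visibility~$8$, I would argue that a component witnessing the distance-$4$ requirement is always reachable by the local exploration the robot performs.

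Finally the last line of \textsc{Sweep} explicitly sends the robot to the new marker, so the third condition of ``strongly structured'' is immediate. Putting the three phases together yields a structured field with no lost components and the robot sitting on the marker, which is the conclusion of the lemma. The step I expect to be the main obstacle is the second one — showing that the marker-placement rule is always realizable and preserves the gap — since it depends on a careful geometric argument about where displaced bricks can end up relative to the chosen marker position; for this I would lean on the detailed case analysis in \cite{DBLP:journals/algorithmica/CzyzowiczDP20} rather than reproducing it.
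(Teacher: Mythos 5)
The paper does not actually prove this lemma: it is imported directly from Czyzowicz \textit{et al.}~\cite{DBLP:journals/algorithmica/CzyzowiczDP20}, with only the informal paragraph preceding it (``We refer the reader to the paper by Czyzowicz \textit{et al.} for the correctness of \textsc{Sweep}\dots'') as justification, so there is no in-paper argument to compare against. Your sketch---checking the three conditions of a strongly structured field against the description of \textsc{Sweep} and deferring the geometric case analysis to the same reference---is essentially the paper's own approach; the only caveat is that your marker-placement arithmetic (deriving $d-3 \geq 5$ and then asserting equality at $4$ can be achieved) is internally shaky, but that is precisely the part both you and the paper delegate to the cited proof.
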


Next, we show the correctness of procedures \textsc{ShiftBricks}, \textsc{ExtendFort} before showing the correctness of Algorithm \textsc{BuildFort}. 

\begin{lemma}\label{lemextendfort2}
Procedure \textsc{ShiftBricks} 
correctly shifts the bricks on one wall of the fort outwards apart from the first and last brick of that wall.
\end{lemma}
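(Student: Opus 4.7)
I plan to prove this lemma by a loop-invariant analysis of the while-loop in \textsc{ShiftBricks}. Let $w_1, w_2, \ldots, w_n$ denote the maximal run of consecutive full cells starting at the robot's initial cell $w_1$ and extending along its current facing direction, so that the cell $w_{n+1}$ beyond the wall is empty. Write $\ell$ for the direction immediately to the robot's left; because \textsc{ExtendFort} invokes \textsc{ShiftBricks} only after a right-turn following a \textsc{TraverseWall} of a fort wall, $\ell$ points outward from the fort. Denote the incoming brick carried by the robot by $B_0$, and the brick originally at $w_i$ by $B_i$. By Lemma~\ref{lemsweepcorrect} the fort has a gap of width $7$, which ensures in particular that each cell $w_i + \ell$ for $i = 2, \ldots, n$ is empty.

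I would then establish the following invariant: at the start of the $i$-th iteration ($i = 1, \ldots, n-1$), the robot sits at $w_i$ carrying exactly one brick (namely $B_0$ when $i = 1$ and $B_i$ otherwise); the cells $w_{i+1}, \ldots, w_n$ still contain their original bricks; the cells $w_2, \ldots, w_i$ are empty; and the cells $w_2 + \ell, \ldots, w_i + \ell$ now hold $B_0, B_2, \ldots, B_{i-1}$, respectively. A single execution of the loop body—move forward onto $w_{i+1}$, drop the carried brick on the left at $w_{i+1} + \ell$, and then pick up the brick at $w_{i+1}$ because the new front cell $w_{i+2}$ is still full—preserves the invariant with $i$ replaced by $i+1$.

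The termination case is the iteration $i = n - 1$: the robot steps onto $w_n$, drops its carried brick $B_{n-1}$ at $w_n + \ell$, and the second full-cell test fails because $w_{n+1}$ is empty, so $B_n$ is not picked up. Reading off the final configuration yields $B_1$ at $w_1$ (never touched), $B_n$ at $w_n$ (skipped in the final iteration), $B_0$ at $w_2 + \ell$, and $B_i$ at $w_{i+1} + \ell$ for every $2 \leq i \leq n-1$. This is exactly the claim of the lemma: every brick of the wall except the first and the last has been moved to the outward side.

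The step I expect to be most delicate is verifying that the required cells $w_i + \ell$ are in fact empty at the moment \textsc{ShiftBricks} is invoked, since an intruding free component or a misplaced marker would disrupt the drop step. This will require a small case analysis on the two sites where \textsc{ExtendFort} calls \textsc{ShiftBricks} (at $counter = 1$ and $counter = 3$), combined with the guarantee from Lemma~\ref{lemsweepcorrect} that no component lies within distance $7$ of the fort and that the marker is placed at a standardized position relative to the first cell of the fort.
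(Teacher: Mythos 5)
Your proposal is correct and follows essentially the same route as the paper's own proof: a step-by-step trace of the while loop showing that the first brick is never disturbed (the robot moves forward before its first pick), the last brick is skipped because the cell beyond the wall is empty, and every drop to the left succeeds because \textsc{Sweep} guarantees an empty margin around the fort (Lemma~\ref{lemsweepcorrect}). Your loop-invariant phrasing is just a more formal packaging of that argument, and the ``delicate step'' you flag about the drop cells is settled in the paper by exactly this gap property together with the outward (leftward) orientation fixed by the clockwise traversal, so no additional idea is needed.
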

\begin{proof}
The robot carrying a brick has three different situations when Procedure \textsc{ShiftBricks} is invoked. 
It shifts all the bricks to the left of its current orientation until it encounters an empty cell.
In Fig.~\ref{fig:shiftbricks}(a), the robot carries a brick. Since the cell in front of the robot is not an empty cell, it moves forward and drops the brick on its left, as shown in Fig.~\ref{fig:shiftbricks}(b). Now it picks up the brick from its current cell, and the resulting configuration of bricks looks like Fig.~\ref{fig:shiftbricks}(c). Finally, the procedure terminates when it encounters an empty cell as per Fig.~\ref{fig:shiftbricks}(d) and places the brick to its left and does not pick up the brick in its current cell.
Note that the configurations of bricks in Fig.~\ref{fig:shiftbricks}(a) and Fig.~\ref{fig:shiftbricks}(d) correspond to a rough fort with an odd number of bricks and a rectangular fort.

\begin{figure}
    \centering
    \includegraphics[width=\linewidth]{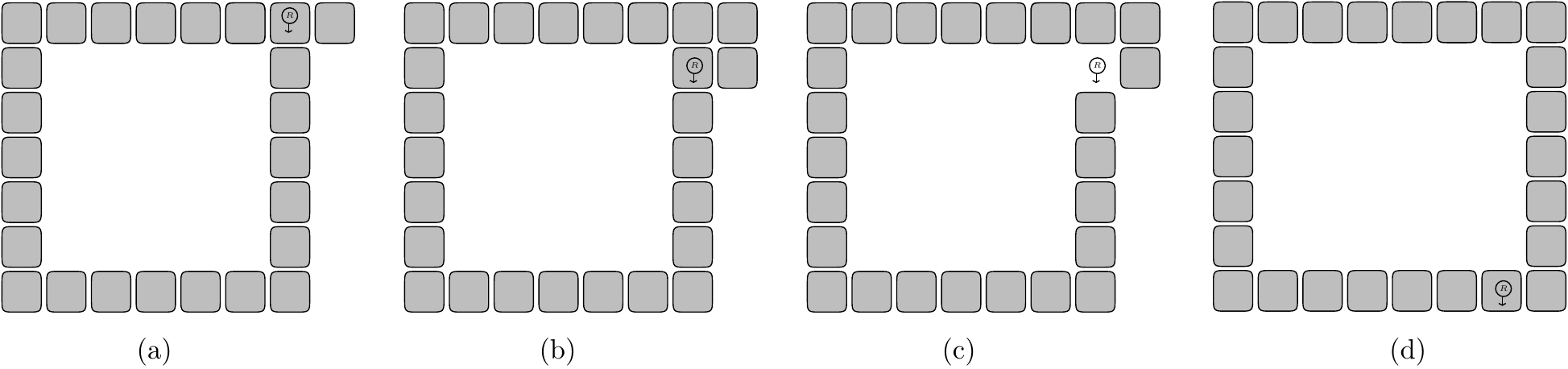}
    \caption{An execution of Procedure \textsc{ShiftBricks}}
    \label{fig:shiftbricks}
\end{figure}

If the cell in front of the current cell of the robot is empty, then the condition on line 1 of Procedure~\ref{proc:shiftbricks} returns false, and the procedure terminates.
If the first cell of the wall has an empty cell in front of it along the robot's orientation, it is also the last. If the first cell has a full cell in front, then line 2 gets executed, and the cell moves forward without picking the brick from the first cell of the row. Hence the first cell of the row is not shifted. This is due to the fact that the cell to the left of the first cell is full, and \textsc{ShiftBricks} is invoked to align the wall with the cell to its left.

The robot moves one cell forward according to line 2, and this cell contains a brick as it satisfied the condition in line 1. Now the brick is dropped to the left as per line 3 and will be the new wall. 
Since \textsc{ShiftBricks} is invoked after \textsc{Sweep}, the left cell would be empty since there is a gap of width seven from any nearest component.
Hence it can drop the brick it is carrying on its left in line 3 correctly. Now, since the robot does not carry any brick, it can pick the brick from the current cell if the condition in line 4 is satisfied, i.e., the robot is not at the last cell of the wall. Only the last cell of the wall will not have a full cell in the front, and thus all the cells of the wall will shift one cell to the left except the last one. 
\end{proof}

In the following lemma, we prove the correctness of \textsc{ExtendFort}.

\begin{lemma}\label{lemextendfortcorrect}
The robot correctly adds a brick to the fort in Procedure~\ref{proc:extendfort} (\textsc{ExtendFort}).
\end{lemma}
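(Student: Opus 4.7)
The plan is to proceed by case analysis on the value of $stage$, and within each value of $stage$, by case analysis on $counter$. Since the robot always begins \textsc{ExtendFort} at the first cell of the fort oriented east (after arriving from the marker), we can trace the effect of each branch on the fort configuration using Figures~\ref{figstage0} and~\ref{figstage1} as the reference targets.

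For the $stage = 0$ branch, there are three sub-cases corresponding to $counter \in \{1,2,3\}$, matching the three transitions of Figure~\ref{figstage0}. For each sub-case I would verify that the deterministic sequence of move/drop/turn instructions transforms the pre-image configuration into the post-image configuration, using the fact that the cells in a radius of seven around the fort are empty (guaranteed by the width-seven gap established by the preceding \textsc{Sweep}, via Lemma~\ref{lemsweepcorrect}). In particular, for $counter = 1$ the brick is dropped east of the first cell; for $counter = 2$ the robot advances one step east and drops south; and for $counter = 3$ the brick is dropped south of the first cell. A direct visual check against Figure~\ref{figstage0} suffices.

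For the $stage \geq 1$ branch, the key observation is that the four walls of the fort are traversed in clockwise order starting from the north-west corner (the first cell). I would argue by induction on the number of completed calls to \textsc{ExtendFort} that immediately before each invocation the fort is in exactly one of the four configurations of Figure~\ref{figstage1}, indexed by the current value of $counter$. Given this invariant, each of the four sub-cases uses \textsc{TraverseWall} to walk to the far end of an appropriate wall, and then either drops the carried brick in the empty cell in front (extending a wall by one) or invokes \textsc{ShiftBricks} to slide an entire wall outward by one cell. Correctness of \textsc{TraverseWall} is immediate from its loop condition, and correctness of \textsc{ShiftBricks} is Lemma~\ref{lemextendfort2}. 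Combining these operations, each sub-case transforms the configuration into the next one in Figure~\ref{figstage1}, closing the inductive step.

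The main obstacle is the bookkeeping for the four sub-cases of the $stage \geq 1$ branch: for each, one must check that (a) the cell where the brick is eventually dropped is empty, (b) the wall about to be shifted has its entire left-adjacent row empty throughout the shift, which follows from the width-seven gap maintained by \textsc{Sweep}, and (c) the orientations implicit in the \textbf{Turn Right} commands after each \textsc{TraverseWall} are exactly those required to stay on the clockwise perimeter walk. Since each sub-case ends by incrementing $counter$ modulo $4$ and invoking \textsc{Sweep}, Lemma~\ref{lemsweepcorrect} re-establishes a strongly structured field with the updated counter, completing the induction and hence the proof.
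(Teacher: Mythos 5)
Your proposal follows essentially the same route as the paper's proof: a case analysis on $stage$ and $counter$, checking each branch against the configurations of Figures~\ref{figstage0} and~\ref{figstage1}, with \textsc{TraverseWall} correctness read off its loop condition, \textsc{ShiftBricks} handled by Lemma~\ref{lemextendfort2}, and the gap from \textsc{Sweep} (Lemma~\ref{lemsweepcorrect}) guaranteeing the drop cells are empty, with the call to \textsc{Sweep} at the end restoring the invariant. The only small imprecision is your appeal to a full width-seven gap at the time of \textsc{ExtendFort}: the paper observes that the shifting inside \textsc{FindNextBrick} may move a brick up to two cells closer, so only a gap of width at least five is available, which still suffices for every placement and shift you need.
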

\begin{proof}
Before the start of Procedure~\ref{proc:extendfort} (\textsc{ExtendFort}), the robot is at the marker and carries a brick. This procedure works in two stages with $stage = 0$ and $stage =1$. 

When $stage$ is 0, the robot follows a defined set of instructions to place the first $3$ bricks to get a perfect fort of size $4$. 
In line 1, it arrives at the first cell of the fort since it is at a bounded distance from the marker, and no other brick is closer to the marker.
Initially, $counter$ is $1$, lines 4-5 get executed. Shifting done in \textsc{FindNextBrick} moves a brick by at most two cells, and from lemma \ref{lemsweepcorrect}, we know before \textsc{FindNextBrick}, the field was strongly structured. Thus the fort had a gap of width seven, and after its execution, the fort has a gap of width at least 5. 
Thus the cell on the east of the first cell should be empty, and hence the robot can place the brick in front as it is oriented to the east. The robot increases $counter$ by $1$ so that in the next call of \textsc{ExtendFort} lines 6-9 get executed. Since the fort now has two bricks, in line 8, the robot can move one cell forward and place the brick in the empty adjacent cell. 
The robot extends the fort and increases $counter$ by $1$. Thus, when the robot arrives to add the next brick, lines 10-14 get executed. The robot places the brick to get a perfect fort of span $2$ correctly since the gap is at least five, and thus, the cell on the south of the first cell of the fort must be empty. The robot increases $counter$ by $1$ and $stage$ by $1$. Thus, in every call to \textsc{ExtendFort} hereafter, the else block in lines 15-39 will get executed. Also, since line 40, updates $counter$ to be its remainder when divided by $4$, $counter$ becomes $0$ after every $4^{th}$ brick is added. 

When $stage$ is $1$, the robot adds the brick to the fort in a recursive manner, the shape of the fort being repeated after adding every $4^{th}$ brick. 
When the number of bricks in the fort is of the form $4t$ where $t$ is any positive integer and new brick has to be added, lines 16-19 get executed. The call to \textsc{TraverseWall} ensures that the robot reaches the end of the north wall of the fort. Since the gap is at least $5$, the robot places the brick in the empty cell extending the north wall of the fort by one brick. Now the number of bricks in the fort is of the form $4t + 1$, and $counter$ is $1$. When the next brick is added, lines 20-25 get executed. The call to \textsc{TraverseWall} ensures that the robot reaches the last cell of the north wall. Then it moves one step back and arrives at the penultimate cell, which is also the north of the east wall of the fort. It turns to the right and moves along the east wall. Lemma \ref{lemextendfort2} ensures that the call to \textsc{ShiftBricks} shifts all the bricks of this row to one cell to the left. As a result, the new brick is placed at the south-east corner of the fort. In the next call, the number of bricks is $4t + 2$, and $counter$ is $2$ results in the execution of lines 26-31. The first call to \textsc{TraverseWall} ensures that the robot reaches the last cell of the north wall, and then it turns right to the east wall where another call to \textsc{TraverseWall} ensures that it reaches the south of the east wall. The robot places the brick in the front, extending the east wall to the south by one brick. In the next call to \textsc{ExtendFort}, lines 32-39 get executed. After two calls to \textsc{TraverseWall} and a turn between them, the robot arrives at the southernmost cell of the east wall. Then it moves one step and turns right onto the south wall.
It invokes \textsc{ShiftBricks} procedure and moves the bricks on the south wall one step to the south (ref. Lemma \ref{lemextendfort2}). After this, the robot places brick on the south-west corner of the fort, extending the size of the fort to $4t + 4$ and $counter$ to $4$. After line 42 is executed, $counter$ resets to $0$, and the procedure runs repetitively. In line 43, Procedure \textsc{Sweep} ensures that the gap of the fort is of width $7$ before the robot goes on to pick another brick. 
Thus the procedure \textsc{ExtendFort} correctly adds a brick to the fort.
\end{proof}

\begin{theorem}
\textsc{BuildFort} correctly builds a fort.
\end{theorem}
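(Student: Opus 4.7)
The plan is to argue correctness by induction on the number of bricks already incorporated into the fort, using a loop invariant which asserts that at the start of every iteration of the \texttt{while} loop in Algorithm~\ref{algo:buildfort}, the field is strongly structured and the shape of the fort is exactly the one predicted by the current values of $stage$ and $counter$ (i.e., one of the configurations depicted in Fig.~\ref{figstage0} or Fig.~\ref{figstage1}). The base case handles the initial setup: since the initial field has span at least $3$ (otherwise the algorithm exits immediately), the robot can designate a marker and a first cell of the fort at the appropriate distance, and Lemma~\ref{lemsweepcorrect} guarantees that the initial call to \textsc{Sweep} yields a strongly structured field with $stage = 0$, $counter = 1$, and the first cell already in place.

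For the inductive step, I would show that the body of the loop preserves the invariant. Assuming strongly structured input, Procedure \textsc{FindNextBrick} correctly locates and picks a brick from some free component, and \textsc{ReturnToMarker} returns the robot (carrying the brick) to the marker; these are already established in~\cite{DBLP:journals/algorithmica/CzyzowiczDP20}. Lemma~\ref{lemextendfortcorrect} then gives that \textsc{ExtendFort} deposits the brick in the correct cell of the fort and updates $stage$ and $counter$ consistently, and the final call to \textsc{Sweep} inside \textsc{ExtendFort} again restores a strongly structured field by Lemma~\ref{lemsweepcorrect}. Hence after one iteration the fort has grown by exactly one brick along the prescribed perimeter pattern, and the invariant is re-established.

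It remains to handle termination and the placement of the last brick. The number of iterations is exactly the number of free-component bricks, which is finite (equal to $z-2$, after reserving one for the marker and one for the first cell); when the \texttt{while} condition fails, no free component remains, so every brick of the environment except the marker lies in the fort. The final line picks up the marker itself and invokes \textsc{ExtendFort} one last time; by the same case analysis used in Lemma~\ref{lemextendfortcorrect}, this brick lands in the next slot dictated by the current $(stage, counter)$ pair, completing either a perfect fort, a rectangular fort, or a rough fort depending on $z \bmod 4$, matching one of the shapes in Fig.~\ref{figstage1}.

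The main obstacle is justifying the very last step, because once the marker is picked up there is no anchor for reorientation, and \textsc{Sweep} is invoked from within \textsc{ExtendFort} on a field that no longer contains any free component. I would argue that this terminal \textsc{Sweep} is vacuous — the loop over full cells at distance at most $7$ from the fort finds nothing to move, and the step that repositions the marker is skipped because there is no free component $C$ — so it merely leaves the robot near the fort without disturbing it. A secondary subtlety is verifying that when \textsc{ExtendFort} is called in the terminal step, the robot does actually manage to reach the first cell of the fort from the marker's location before the marker is picked; this follows from the fact that immediately before picking, the field is still strongly structured, so the marker sits at distance three from the first cell and the robot knows the direction to walk.
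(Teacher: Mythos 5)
Your proposal is correct and follows essentially the same route as the paper: induction over the iterations of the \texttt{while} loop with the invariant that the field is strongly structured, using Lemma~\ref{lemsweepcorrect} for the base case and Lemmas~\ref{lemextendfortcorrect} and \ref{lemsweepcorrect} (via the \textsc{Sweep} call inside \textsc{ExtendFort}) for the inductive step. You additionally spell out termination and the final marker-brick step (including why the terminal \textsc{Sweep} is harmless), details the paper's own proof leaves implicit, which only strengthens the argument.
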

\begin{proof}
To prove this theorem, we use induction on the number of iterations of the while loop in line 9 in $\textsc{BuildFort}$.
\textsc{Sweep} is called in line 6 before the while loop. From lemma \ref{lemsweepcorrect}, we know that the execution of \textsc{Sweep} has resulted in a strongly structured field. Thus before the first iteration, i.e., $i = 1$, of the while loop, the field is strongly structured. We assume the induction hypothesis holds after the $i^{th}$ iteration that the field is strongly structured and prove this remains true after the $(i+1)^{th}$ iteration. 
Using procedures \textsc{FindNextBrick} and \textsc{ReturnToMarker}, the robot carries a brick after it performs the search walk on one of the components. 
From Lemma~\ref{lemextendfortcorrect}, we know that the robot correctly adds a brick to the fort. Also, \textsc{Sweep} is called again in \textsc{ExtendFort}, which results in a strongly structured field. Thus, we proved our induction argument. Thus, the field is strongly structured before each iteration of the while loop in \textsc{BuildFort} and correctly builds the fort. 
\end{proof}

\subsection{Complexity}
From \cite{DBLP:journals/algorithmica/CzyzowiczDP20}, we have the complexity of \textsc{Sweep}, \textsc{FindNextBrick} and \textsc{ReturnToMarker} to be $O(s)$ for each of the procedures. We show the time complexity of Procedure~\ref{proc:extendfort} in Lemma~\ref{lemextendtime} and then we show the complexity of Algorithm~\ref{algo:buildfort} in Theorem~\ref{thm:buildforttime}.

\begin{lemma}\label{lemextendtime}
The execution of \textsc{ExtendFort} requires $O(z)$ time.
\end{lemma}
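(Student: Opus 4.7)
The plan is to bound the running time of \textsc{ExtendFort} by decomposing it into its constituent operations and bounding each. I will first handle the $stage = 0$ case, which executes only a constant number of elementary actions (a move, a drop, a counter update) and therefore runs in $O(1)$ time plus the cost of the final \textsc{Sweep}. The $stage = 1$ case is the substantive one: across the four values of $counter$ the body consists of at most four calls to \textsc{TraverseWall}, at most one call to \textsc{ShiftBricks}, a constant number of turns, back-steps and brick drops, and exactly one call to \textsc{Sweep} at the end.

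Next I would bound the cost of each subroutine in terms of $z$. The key geometric observation is that at any point during the execution, the fort under construction has at most $z$ bricks, so its span $s'$ satisfies $s' \le z/2$ (by the relation $z = 2s'$ for a perfect fort, with the rectangular/rough intermediate cases differing by at most a constant). Consequently, each wall of the fort has length $O(z)$. Since \textsc{TraverseWall} simply walks forward along a wall until it hits an empty cell, it terminates after at most $s'+1 = O(z)$ steps. By essentially the same argument, an invocation of \textsc{ShiftBricks} performs $O(1)$ work per wall cell and shifts at most one full wall, so it also runs in $O(z)$ time. For \textsc{Sweep}, I would cite the $O(s)$ bound from the paper of Czyzowicz et al.\ together with the fact that the total span $s$ of the field (fort plus the nearby components living inside a gap of constant width seven) is also $O(z)$.

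Putting these bounds together, each branch of \textsc{ExtendFort} consists of a constant number of operations, each costing $O(z)$ time, for a total of $O(z)$ time per invocation, which is what the lemma asserts.

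The main obstacle I anticipate is being precise about why the span of the ambient field remains $O(z)$ at the moment \textsc{Sweep} is invoked from within \textsc{ExtendFort}. Because \textsc{Sweep} may reposition bricks outside the fort, one must argue that its input span is still $O(z)$: this follows because the gap maintained by the previous \textsc{Sweep} is of constant width and the free components together contain at most $z$ bricks, so the total diameter of the field is dominated by the fort's own span. Once this is nailed down, the rest of the argument is a straightforward constant-factor accounting over the four branches of the $stage = 1$ case.
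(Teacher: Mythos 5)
Your proposal is correct and follows essentially the same reasoning as the paper: the cost of \textsc{ExtendFort} is dominated by walks along the fort's walls, whose length is bounded by the fort's span $s' = O(z)$, giving $O(z)$ per invocation. Your version is just a more explicit per-subroutine accounting (including the trailing \textsc{Sweep}, whose $O(s)$ cost the paper charges separately in the complexity theorem rather than inside this lemma), so no gap remains.
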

\begin{proof}
Since we add bricks to the fort starting from the north-west corner, adding a brick in the south-east corner requires the robot to travel a distance $s'$, where $s'$ is the span of the resulting fort. In case of the fort, $s' = O(z)$. Hence \textsc{ExtendFort} takes $O(z)$ time.
\end{proof}

\begin{theorem}\label{thm:buildforttime}
A fort is created by \textsc{BuildFort} given an initial connected field in time $O(z^2)$ where $z$ is the number of bricks in the field.
\end{theorem}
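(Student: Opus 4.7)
The plan is to bound the total running time by a standard ``iterations times per-iteration cost'' argument. The main loop in \textsc{BuildFort} (line~9 of Algorithm~\ref{algo:buildfort}) terminates only when no free component remains, and each iteration consumes exactly one brick from some free component to extend the fort. Since the total number of bricks in the environment is $z$, the loop executes at most $z$ times, and the concluding step (adding the marker brick) contributes one more call to \textsc{ExtendFort}. Thus it suffices to show that each iteration costs $O(z)$ time, so that the grand total is $O(z^2)$.

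Next, I would bound the cost of one iteration. Each iteration performs a call to \textsc{FindNextBrick}, a call to \textsc{ReturnToMarker}, and a call to \textsc{ExtendFort} (which internally invokes \textsc{TraverseWall}, possibly \textsc{ShiftBricks}, and finally \textsc{Sweep}). By the cited results of Czyzowicz \textit{et al.}~\cite{DBLP:journals/algorithmica/CzyzowiczDP20}, the first three primitives (\textsc{Sweep}, \textsc{FindNextBrick}, \textsc{ReturnToMarker}) each run in $O(s)$ time, where $s$ is the current span of the field. By Lemma~\ref{lemextendtime}, the traversal and shifting work inside \textsc{ExtendFort} takes $O(z)$ time, since the robot walks along the perimeter of the current fort whose span is $O(z)$.

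The one thing I need to be careful about is that $s$ genuinely remains $O(z)$ throughout the execution. I would argue this as follows. The fort itself never contains more than $z$ bricks and hence has span $O(z)$; by Lemma~\ref{lemsweepcorrect} the invariant that every free component lies within distance $7$ of the fort is maintained by each call to \textsc{Sweep}, so the span of the entire field at the start of every iteration is at most the span of the fort plus a constant. Therefore $s = O(z)$ at every call site of \textsc{Sweep}, \textsc{FindNextBrick}, and \textsc{ReturnToMarker}, giving a per-iteration cost of $O(s) + O(s) + O(z) = O(z)$.

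Multiplying the per-iteration cost $O(z)$ by the at most $z+1$ iterations of the main loop yields a total running time of $O(z^2)$, matching the lower bound of Theorem~\ref{theoremlowerbound} and showing that \textsc{BuildFort} is time-optimal. The main subtle point, and the one I would write out most carefully, is the claim that the span of the field is always controlled by the span of the fort plus a constant; everything else is a routine accounting of the complexities already established for the three subroutines and the bound given by Lemma~\ref{lemextendtime}.
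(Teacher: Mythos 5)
Your proposal follows essentially the same argument as the paper's proof: roughly $z$ iterations of the main loop, each costing $O(s)$ for \textsc{Sweep}, \textsc{FindNextBrick} and \textsc{ReturnToMarker} plus $O(z)$ for \textsc{ExtendFort} via Lemma~\ref{lemextendtime}, with $s = O(z)$, giving $O(z^2)$ in total. One minor caveat: your intermediate claim that the field span is at most the fort span plus a constant is stated too strongly (a free component whose nearest cell is within distance seven of the fort can itself have extent up to roughly the initial span), but the needed bound $s = O(z)$ still holds, and the paper's own proof is no more detailed here, simply noting that the span of the initial configuration is at most $z$.
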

\begin{proof}
There are a total of $z$ bricks in the field. Initially, one brick was chosen as the fort and another as the marker. Thus the rest of $z-2$ bricks will be picked in the while loop of \textsc{BuildFort}. Thus there will be $z-2$ iterations of this loop. Also, we know that $O(s)$ is the length of the search walk, and the robot traverses it twice, once while picking the brick and the other while returning from it. We know that a single execution of \textsc{Sweep} takes time $O(s)$. Also, we saw in lemma \ref{lemextendtime} that adding a brick to the fort requires time $O(z)$, which is done in one execution of \textsc{ExtendFort}. 
The span of the initial configuration can be at most $z$.
Thus, \textsc{BuildFort} requires $O(z^2)$ time.
\end{proof}

\section{Conclusion and Future Work}\label{sec:conclusion}

We proposed a Fort formation problem by a robot modeled as a finite automaton in an infinite grid and established a lower bound for the same.
We developed an algorithm that builds a Fort from a given initial connected field of bricks in an infinite grid by a mobile robot in worst-case optimal time.
The research area is quite nascent, and a lot can be done further. 
One can explore in the direction of building different structures given an initial field. 
A variation of the bricks with hexagonal tiling can also be considered instead of a square bricks.
Extension to 3 dimensions with six directions of movement can also be considered as a future work.
Further research using multiple mobile agents to build the structures can be performed to determine if the time complexity can be improved.

\bibliographystyle{plain}
\bibliography{bib}

\end{document}